\theoremstyle{definition}
\newtheorem{theo}{Theorem}
\theoremstyle{definition}
\newtheorem{definition}{Definition}
\date{}
\begin{document}
\title{Vertex Ordering Characterizations of Interval $r$-Graphs}
\author{ Indrajit Paul, Ashok Kumar Das\\Department of Pure Mathematics\\ University of Calcutta\\ Kolkata, West Bengal\\ India\\
Email Address -  
paulindrajit199822@gmail.com \&\\ ashokdas.cu@gmail.com}
\maketitle
\begin{abstract}
An $r$-partite graph is an interval $r$-graph if corresponding to each vertex we can assign an interval of the real line such that two vertices $u$ and $v$ of different partite sets are adjacent if and only if their corresponding intervals intersect. In this paper, we provide two vertex-ordering characterizations of interval $r$-graphs, utilizing generalized interval orderings and $r$-interval orderings, and identify forbidden patterns for interval $r$-graphs in terms of specific orderings of their vertices.
\end{abstract}

\noindent {\bf Keywords:}
 interval $r$-graphs, vertex ordering, generalized interval ordering, $r$-interval ordering , forbidden patterns
\section{Introduction}
A graph $G=(V,E)$ is called an interval graph, if there exists a family $\mathcal{I}=\{I_v:v\in V\}$, of intervals of the real line such that, for $a,b\in V$ the vertices $a$ and $b$ are adjacent in $G$ if and only if $I_a\cap I_b\neq\emptyset$. A bigraph $B=(X,Y,E)$ is said to be an interval bigraph if there exists a family $\mathcal{I}=\{I_v:v\in X\cup Y\}$ of intervals of the real line such that for all $x\in X$ and $y\in Y$ the vertices $x$ and $y$ are adjacent in $B$ if and only if $I_x\cap I_y\neq\emptyset$. Interval bigraphs were introduced in \cite{harary} and have been studied in \cite{brown,Ashok-sandip-malay,hell-huang,muller1}. Interval digraphs were introduced by Sen et al. \cite{sen-das-roy-west}. It was shown in \cite{Ashok-sandip-malay} that the notions of interval digraphs and interval bigraphs are equivalent. Interval bigraphs or interval digraphs have become of interest in such new areas as graph homomorphism, e.g \cite{feder-hell-huang-rafiey}. The concept of interval bigraph was generalized by Brown \cite{Brown}. A graph $G=(V,E)$ is called interval $r$-graph if the vertex set $V$ can be partitioned into $r$ stable sets say $V_1$, $V_2$,\dots, $V_r$ and corresponding to each vertex we can assign an interval such that two vertices $u$ and $v$ are adjacent in $G$ if and only if the corresponding intervals $I_u$ and $I_v$ have non-empty intersection and $u$ and $v$ belong to different partite sets of vertices.\\
In this paper, we study different types of vertex orderings of interval $r$-graphs and, with the help of these orderings, we characterize the class of interval $r$-graphs. Finally, we provide forbidden patterns for interval $r$-graphs with respect to a particular vertex ordering. 
\section{Main Result}
In this section, we introduce several types of vertex orderings for
$r$-partite graphs and using these orderings we characterize interval
$r$-graphs. Based on these orderings, we provide characterizations and identify forbidden patterns for interval $r$-graphs.\\
We begin with the definition of \textit{generalized interval ordering} of the vertices of an $r$-partite graph.\\
\begin{definition}\label{d1}
Let $G = (X_1, X_2, \ldots, X_r, E)$ be an $r$-partite graph of order $n$. 
Arrange the vertices of $G$ linearly and label them $v_1, v_2, \ldots, v_n$ such that 
the vertex $v_i$ is placed on the $i$-th positive integer of the real line. 
Let $X = \bigcup_{i=1}^{r} X_i$ denote the vertex set of $G$. 
The set $X$ is said to admit a \emph{generalized interval ordering} 
if the following condition is satisfied:
\begin{itemize}
\item For any edge $v_i v_j \in E$ with $i > j$, where $v_i \in X_\alpha$, 
$v_j \in X_\beta$, and $\alpha \ne \beta$, it holds that
$ v_\ell v_j \in E \quad \text{for all } v_\ell \notin X_\beta 
\text{ with } \ell \in \{ j + 1, j + 2, \ldots, i - 1 \}.$
\end{itemize}

\end{definition}
We shall now characterize interval $r$-graphs using this ordering.
\begin{theo}\label{t1}
    \textit{An $r$-partite graph $G=(X_1,X_2,...X_r,E)$ is a interval $r$-graph if and only if the vertex set $X=\bigcup_{i=1}^{r} X_i $ of $G$ admits a \textit{generalized interval ordering}}.
\end{theo}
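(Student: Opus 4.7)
The plan is to prove both directions by making the natural choice: order vertices by left endpoints in the forward direction, and reconstruct intervals from positions and ``maximal right neighbor'' indices in the backward direction. The forward direction is essentially routine; the real content lies in showing that the simple interval construction in the reverse direction actually realizes the edge set, and this is where the generalized interval ordering condition must be invoked precisely.

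For the forward direction, I would assume $G$ has an interval representation $\{I_v\}_{v\in X}$ with left endpoints $L_v$ and right endpoints $R_v$, and order the vertices so that $L_{v_1}\le L_{v_2}\le\cdots\le L_{v_n}$ (ties broken arbitrarily), placing $v_i$ at integer $i$. Given an edge $v_iv_j$ with $i>j$, $v_i\in X_\alpha$, $v_j\in X_\beta$, $\alpha\ne\beta$, since $L_{v_i}\ge L_{v_j}$ and $I_{v_i}\cap I_{v_j}\ne\emptyset$, I would observe $L_{v_i}\in I_{v_j}$, i.e.\ $L_{v_i}\le R_{v_j}$. Then for any $v_\ell\notin X_\beta$ with $j<\ell<i$, the monotonicity of left endpoints gives $L_{v_j}\le L_{v_\ell}\le L_{v_i}\le R_{v_j}$, hence $L_{v_\ell}\in I_{v_j}$ and the two intervals intersect. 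Since $v_\ell\notin X_\beta$, $v_j\in X_\beta$ forces $v_\ell v_j\in E$.

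For the reverse direction, given a generalized interval ordering $v_1,\ldots,v_n$, I would define, for each $j$,
\[
r_j \;=\; \max\bigl(\{j\}\cup\{k:v_kv_j\in E,\,k>j\}\bigr),\qquad I_{v_j}=[j,r_j].
\]
The easy half is edge $\Rightarrow$ intersection: if $v_iv_j\in E$ with $i>j$, then $r_j\ge i$ by definition, so $I_{v_i}\cap I_{v_j}\supseteq\{i\}$. The substantive half is intersection $\Rightarrow$ edge (for vertices in different parts): suppose $v_i\in X_\alpha$, $v_j\in X_\beta$, $\alpha\ne\beta$, $i>j$, and $I_{v_i}\cap I_{v_j}\ne\emptyset$. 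Then $j<i\le r_j$. If $i=r_j$, the maximum in the definition of $r_j$ is realized by $v_i$, giving $v_iv_j\in E$ directly. Otherwise $i<r_j$, and there exists $v_\ell$ with $\ell=r_j$ and $v_\ell v_j\in E$; note $v_\ell\notin X_\beta$ because the graph is $r$-partite. Applying the generalized interval ordering to the edge $v_\ell v_j$, every $v_m\notin X_\beta$ with $j<m<\ell$ is adjacent to $v_j$, and in particular $v_i\notin X_\beta$ with $j<i<\ell$ yields $v_iv_j\in E$.

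The main obstacle is really a bookkeeping one: making sure the construction $I_{v_j}=[j,r_j]$ handles the boundary cases cleanly (namely $r_j=j$, where $v_j$'s interval is degenerate and trivially intersects no later interval, and $r_j=i$, where the edge is obtained directly rather than through the ordering condition). Once the case $r_j>i$ is reduced via the edge $v_\ell v_j$ realizing $r_j$, the generalized interval ordering is exactly what is needed to transfer adjacency from the extremal neighbor $v_\ell$ back to the intermediate vertex $v_i$; no stronger hypothesis is used, which shows the ordering condition is also sufficient.
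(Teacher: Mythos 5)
Your proposal is correct and follows essentially the same route as the paper: necessity by ordering vertices according to increasing left endpoints, and sufficiency by assigning $v_j$ the interval from its position to the index of its farthest right neighbor (the paper uses $[j, m_j+\tfrac12]$ rather than your $[j,r_j]$, which makes no difference since the relevant comparisons are between integers). Your explicit separation of the boundary case $i=r_j$ and the justification that $v_\ell\notin X_\beta$ are minor refinements of the same argument.
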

\begin{proof}
    Necessity: Let $G = (X_1, X_2, \dots, X_r, E)$ be an \emph{interval $r$-graph}. Then there exists a class 
$\mathcal{I} = \{ I_v : v \in \bigcup_{i=1}^r X_i \}$ of intervals corresponding to each vertex $v$ of $G$, 
such that $uv \in E$ if and only if $I_u \cap I_v \neq \emptyset$, where $u$ and $v$ belong to different partite sets. 
Without loss of generality, we assume that all intervals have distinct endpoints. 

Now, order the vertices of $G$ as $v_1, v_2, \dots, v_n$ according to the increasing order of the left endpoints 
of their corresponding intervals, where $n$ denotes the order of the $r$-partite graph $G$. 
We will show that this ordering is a generalized interval ordering  for the $r$-partite graph $G$.

Suppose $v_i v_j \in E$, where $i > j$ (and $v_i \in X_\alpha$, $v_j \in X_\beta$, $\alpha \neq \beta$). 
According to our ordering, the left endpoint of $v_i$ must lie within the interval $I_{v_j}$ corresponding to $v_j$. 
From Figure~1, it follows that the left endpoints of the intervals 
$I_{v_{j+1}}, I_{v_{j+2}}, I_{v_{j+3}}, \dots, I_{v_{i-1}}$ are all contained in $I_{v_j}$. Thus $v_lv_j\in E$ if $v_l\notin X_\beta$ for $l\in \{j+1, \dots, i-1\}$. 
Hence, by Definition~1, we conclude that the ordering 
$v_1, v_2, \dots, v_n$ of the vertices of $G$ is a generalized interval ordering.
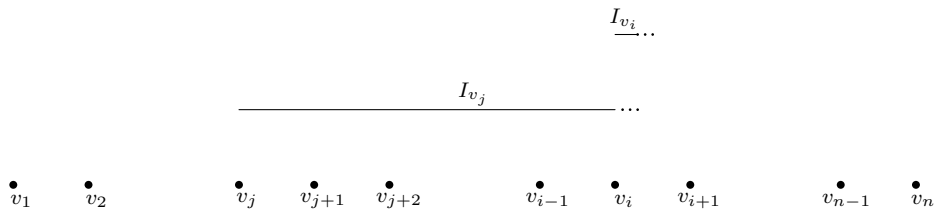
\begin{figure}[H]
    \centering
    \begin{tikzpicture}[line cap=round,line join=round,x=1.0cm,y=1.0cm]
\clip(-1,-0.5) rectangle (13,2.5);
\draw (3,1)-- (8,1);
\fill [color=black] (8.2,1) circle (.5pt);
\fill [color=black] (8.1,1) circle (.5pt);
\fill [color=black] (8.3,1) circle (.5pt);

\draw (8,2)-- (8.3,2);
\begin{scriptsize}
\fill [color=black] (0,0) circle (1.5pt);
\draw[color=black] (0.11,-0.21) node {$v_1$};
\fill [color=black] (1,0) circle (1.5pt);
\draw[color=black] (1.11,-0.21) node {$v_2$};
\fill [color=black] (3,0) circle (1.5pt);
\draw[color=black] (3.11,-0.21) node {$v_j$};
\fill [color=black] (4,0) circle (1.5pt);
\draw[color=black] (4.12,-0.21) node {$v_{j+1}$};
\fill [color=black] (5,0) circle (1.5pt);
\draw[color=black] (5.13,-0.21) node {$v_{j+2}$};
\fill [color=black] (7,0) circle (1.5pt);
\draw[color=black] (7.11,-0.21) node {$v_{i-1}$};
\fill [color=black] (8,0) circle (1.5pt);
\draw[color=black] (8.12,-0.21) node {$v_i$};
\fill [color=black] (9,0) circle (1.5pt);
\draw[color=black] (9.12,-0.21) node {$v_{i+1}$};
\fill [color=black] (11,0) circle (1.5pt);
\draw[color=black] (11.08,-0.21) node {$v_{n-1}$};
\fill [color=black] (12,0) circle (1.5pt);
\draw[color=black] (12.1,-0.21) node {$v_n$};
\fill [color=black] (8.4,2) circle (.5pt);
\fill [color=black] (8.3,2) circle (.5pt);
\fill [color=black] (8.5,2) circle (.5pt);
\draw[color=black] (8.12,2.21) node {$I_{v_i}$};
\draw[color=black] (6.12,1.21) node {$I_{v_j}$};

\end{scriptsize}
\end{tikzpicture}
    \caption{Left end point of $I_{v_i}$ is contained in $I_{v_j}$.}
    \label{fig:1}
\end{figure}
\textbf{Sufficiency:}  
Let $G = (X_1, X_2, \dots, X_r, E)$ be an $r$-partite graph and the ordering $v_1,v_2,\dots, v_n$  of its vertices is a generalized interval ordering.\\ Now, let
$m_i = \max \left( \{\, j : v_i v_j \in E \,\} \cup \{ i \} \right)$.  
Construct an interval $I'_{v_i}$ for each vertex $v_i$ $(1 \leq i \leq n)$ as  $I'_{v_i} = \left[\, i, m_i+ \frac{ 1}{2} \,\right]$.  First we shall show that the collection $\mathcal{I}' = \{\, I'_{v_i} : 1 \leq i \leq n \,\}$ of intervals 
satisfies the necessary conditions for $G$ to be an interval $r$-graph.  

Let $v_i$ and $v_j$ be two vertices belonging to different partite sets of $G$ such that 
$v_i v_j \in E$ and $i > j$.  
According to our construction,  
$I'_{v_i} = \left[\, i, m_i+\frac{1}{2} \,\right]
\quad \text{and} \quad
I'_{v_j} = \left[\, j, m_j+\frac{1}{2} \,\right]$. 
Since $v_i v_j \in E$ and $i > j$, we have $m_j \ge i$, which implies 
$m_j+\frac{ 1}{2} \ge i$. 
Therefore, the interval $I'_{v_j}$ contains $i$, and hence 
$I'_{v_i} \cap I'_{v_j} \neq \emptyset$.  

Conversely, let $v_i$ and $v_j$ be two vertices of $G$ from different partite sets such that 
$I'_{v_i} \cap I'_{v_j} \neq \emptyset$ and $i > j$.  
Then $i \in I'_{v_j}$, which implies $m_j \ge i$.  
Thus, $j < i \le m_j$, and since $v_j v_{m_j} \in E$, by the definition of a generalized interval 
ordering we have $v_i v_j \in E$. Hence,  
$v_i v_j \in E \quad \Longleftrightarrow \quad I'_{v_i} \cap I'_{v_j} \neq \emptyset$.
Therefore, $G$ is an interval $r$-graph.

\end{proof}
Before presenting another vertex-ordering characterization, we define the concept of almost consecutive ones in the rows and columns of the adjacency matrix of $r$-partite graphs.

\begin{definition}\label{d2}
    Let $G = (X_1, X_2, \ldots, X_r, E)$ be an $r$-partite graph and let $\textbf{A}$ be the adjacency matrix of $G$. A row (say, the $i$-th row) is said to have \emph{almost consecutive ones} if, between any two ones in that row, whenever a zero appears at the position $(v_i, v_k)$, the vertices $v_i$ and $v_k$ belong to the same partite set of $G$.
\end{definition} 
\begin{definition}\label{d3} 
Let $G = (X_1, X_2, \ldots, X_r, E)$ be an $r$-partite graph and let $\textbf{A}$ be the adjacency matrix of $G$. A column (say, the $j$-th column) is said to have \emph{almost consecutive ones} if, between any two ones in that column, whenever a zero appears at the position $(v_l, v_j)$, the vertices $v_l$ and $v_j$ belong to the same partite set of $G$.
\end{definition}
\par We now introduce another type of vertex ordering for $r$-partite graphs, referred to as the 
\emph{$r$-interval ordering}. Using this ordering, we characterize the class of 
\emph{interval $r$-graphs}.  

Let $G = (X_1, X_2, \dots, X_r, E)$ be an $r$-partite graph of order $n$, and let 
$v_1, v_2, \dots, v_n$ be an ordering of its vertices. 
Place these vertices on the real line such that the $i$-th vertex $v_i$ is placed on the positive integer point $i$.  

Let $\textbf{A}$ denote the adjacency matrix of $G$, where both rows and columns are arranged in the increasing order of vertex indices. 
Consider any row of $\textbf{A}$ (say, the $i$-th row). Define $R_i$ as the set of entries of $1$ in this row that appear \emph{almost consecutively}, 
beginning with the column corresponding to the vertex $v_{s_i}$, where $v_{s_i}$ is the first vertex to the right of $v_i$ that belongs to a different partite set and $s_i>i$.  
If no $v_i v_{s_i} \in E$, then $R_i = \emptyset$. Otherwise, the sequence continues rightward until the last $1$ in this \emph{almost consecutive} manner is reached.  

Similarly, consider any column of $\textbf{A}$ (say, the $j$-th column). Define $C_j$ as the set of entries of $1$ in this column that appear 
\emph{almost consecutively}, beginning with the row corresponding to the vertex $v_{s_j}$, where $v_{s_j}$ is the first vertex to the right of $v_j$ 
that belongs to a different partite set and $s_j>j$.  
If no $v_j v_{s_j} \in E$, then $C_j = \emptyset$. Otherwise, the sequence continues downward until the last $1$ in this \emph{almost consecutive} manner is reached.  

An ordering $v_1, v_2, \dots, v_n$ of the vertices of $G$ is called an \emph{$r$-interval ordering} 
if the sets $R_i$ and $C_j$ together contain all the $1$’s in the adjacency matrix $\textbf{A}$.

\begin{theo}\label{t2}
    An $r$-partite graph $G = (X_1, X_2, \dots, X_r, E)$ is an interval $r$-graph if and only if its vertex set $ X = \bigcup_{i=1}^r X_i $ admits an $r$-interval ordering.
\end{theo}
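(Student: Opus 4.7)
My plan is to reduce Theorem~\ref{t2} to Theorem~\ref{t1} by showing that an ordering of the vertices of $G$ is an $r$-interval ordering if and only if it is a generalized interval ordering in the sense of Definition~\ref{d1}. Once this equivalence is in place, both implications follow immediately from Theorem~\ref{t1}.

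For the forward direction I would begin with the ordering produced in the proof of Theorem~\ref{t1}, namely the vertices arranged in increasing order of the left endpoints of their assigned intervals. This is already a generalized interval ordering, and I would show it also satisfies the $r$-interval condition. Fix a row $i$ and, assuming it contains a $1$ to the right of the diagonal, let $m_i$ be the largest column index with $(i,m_i)=1$. Applying the generalized interval condition to the edge $v_iv_{m_i}$ shows that every vertex strictly between $v_i$ and $v_{m_i}$ that lies outside the partite set of $v_i$ is a neighbor of $v_i$. In particular $v_iv_{s_i}\in E$, and the $1$'s in row $i$ to the right of the diagonal are exactly at the columns in $[s_i,m_i]$ whose vertices lie outside the partite set of $v_i$, which is an almost consecutive sequence starting at $s_i$. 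Hence $R_i$ absorbs every such $1$, and by the symmetry of $\mathbf{A}$ the set $C_i$ absorbs the corresponding $1$'s below the diagonal in column $i$. Taking the union over all rows and columns exhausts the $1$'s of $\mathbf{A}$.

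For the reverse direction, suppose the ordering is an $r$-interval ordering. I would show it is a generalized interval ordering and then invoke the sufficiency part of Theorem~\ref{t1}. Let $v_iv_j\in E$ with $i>j$, $v_i\in X_\alpha$, $v_j\in X_\beta$, $\alpha\neq\beta$. The $1$ at position $(j,i)$ lies in row $j$ to the right of the diagonal, so among all the sets $R_k$ and $C_k$ only $R_j$ can contain it. Hence $R_j$ extends from column $s_j$ through column $i$, which forces $v_jv_{s_j}\in E$ and, by the almost consecutive condition, prohibits any column $\ell$ with $s_j<\ell<i$ and $v_\ell\notin X_\beta$ from satisfying $(j,\ell)=0$. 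Since every $\ell$ with $j<\ell<s_j$ automatically has $v_\ell\in X_\beta$ by the choice of $s_j$, we conclude $v_jv_\ell\in E$ for every $\ell$ with $j<\ell<i$ and $v_\ell\notin X_\beta$, which is precisely the condition of Definition~\ref{d1}.

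The main delicacy I expect is the bookkeeping around the boundary cases of the definitions of $R_i$ and $C_j$: ensuring $s_i$ exists whenever row $i$ has any $1$ to the right of the diagonal, verifying the starting-edge condition $v_iv_{s_i}\in E$, and using the symmetry of $\mathbf{A}$ to pair up the row and column perspectives. Once these small cases are discharged, the argument reduces to the observation that the almost consecutive property of $R_i$ and the filling-in clause of a generalized interval ordering encode the very same structural constraint on the adjacency between $v_i$ and its later neighbors.
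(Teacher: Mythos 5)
Your proposal is correct, but it takes a different route from the paper. You reduce Theorem~\ref{t2} to Theorem~\ref{t1} by proving that a fixed ordering of the vertices is an $r$-interval ordering if and only if it is a generalized interval ordering: in one direction you apply the generalized-interval condition to the edge $v_iv_{m_i}$ (with $m_i$ the largest neighbor index) to show the $1$'s of row $i$ right of the diagonal form exactly the almost consecutive run starting at $s_i$, so $R_i$ (and, by symmetry of $\mathbf{A}$, $C_j$ for below-diagonal entries) absorbs everything; in the other direction you observe that the $1$ at $(j,i)$ with $j<i$ can only be covered by $R_j$, and the almost consecutive property of that run, together with the fact that all vertices strictly between $v_j$ and $v_{s_j}$ lie in $v_j$'s partite set, is exactly the filling-in clause of Definition~\ref{d1}. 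The paper instead proves Theorem~\ref{t2} from scratch: necessity by ordering vertices by increasing left endpoints of an interval model and arguing geometrically that each $1$ at $(i,j)$ lands in $R_i$ or $C_j$, and sufficiency by the explicit construction $I_{v_i}=[i,r_i]$ (with $r_i$ the last column of $R_i$) and a direct verification that this is an interval model. Your reduction is more modular and yields the slightly stronger by-product that the two ordering notions coincide ordering-by-ordering, while the paper's direct argument produces an interval representation immediately from an $r$-interval ordering without passing through Theorem~\ref{t1}; the small bookkeeping points you flag (existence of $s_i$ when row $i$ has a $1$ to the right of the diagonal, the starting condition $v_iv_{s_i}\in E$, and the fact that $C_i$ cannot capture entries above the diagonal) all check out under the paper's definitions.
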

\begin{proof}
Necessity: Let \( G = (X_1, X_2, \ldots, X_r, E) \) be an interval \( r \)-graph of order \( n \).
Then there exists an interval model
\[
\mathcal{I} = \{ I_v : v \in X = \bigcup_{i=1}^r X_i \}
\]
such that for any two vertices \( u, v \) belonging to different partite sets,
\[
uv \in E \quad \text{if and only if} \quad I_u \cap I_v \neq \emptyset.
\]

Without loss of generality, we may assume that the interval model \( \mathcal{I} \) satisfies the following conditions:
\begin{enumerate}
    \item All intervals are closed, i.e., each interval includes its endpoints.
    \item No two intervals share the same left endpoint.
\end{enumerate}

Label the vertices of \( G \) as \( v_1, v_2, \ldots, v_n \) according to the \textit{increasing order of the left endpoints} of their corresponding intervals.
Arrange both the rows and columns of the adjacency matrix \( A \) of \( G \) according to this vertex ordering.

We claim that, under this arrangement, the sets \( R_i \) and the sets \( C_j \) together contain all the 1’s of the adjacency matrix \( A \).

\medskip
Suppose an entry of $A$ is \( a_{ij} = 1 \).
This implies that \( v_i \) and \( v_j \) are adjacent in \( G \), i.e., \( v_i v_j \in E \).
Let \( v_i \in X_\alpha \) and \( v_j \in X_\beta \), where \( \alpha \neq \beta \).
We distinguish two cases according to the relative positions of \( v_i \) and \( v_j \) in the vertex ordering.

\paragraph{Case 1:} When \( i > j \).
From the interval representation (see Figure~2), any vertex appearing between \( v_j \) and \( v_i \) in this ordering, and belonging to a partite set other than \( X_\beta \), will be adjacent to \( v_j \).
Consequently, \( C_j \) will contain the 1 corresponding to the entry \( (i, j) \) in the adjacency matrix.
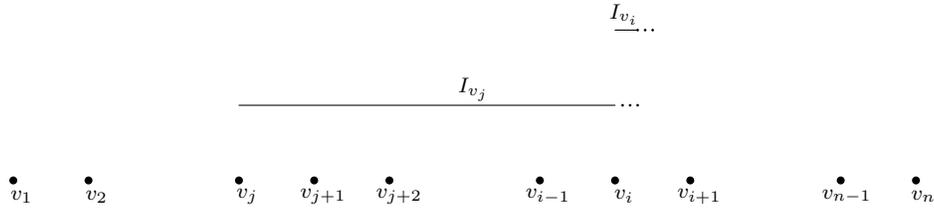
\begin{figure}[H]
    \centering
    \begin{tikzpicture}[line cap=round,line join=round,x=1.0cm,y=1.0cm]
\clip(-1,-0.5) rectangle (13,2.5);
\draw (3,1)-- (8,1);
\fill [color=black] (8.2,1) circle (.5pt);
\fill [color=black] (8.1,1) circle (.5pt);
\fill [color=black] (8.3,1) circle (.5pt);

\draw (8,2)-- (8.3,2);
\begin{scriptsize}
\fill [color=black] (0,0) circle (1.5pt);
\draw[color=black] (0.11,-0.21) node {$v_1$};
\fill [color=black] (1,0) circle (1.5pt);
\draw[color=black] (1.11,-0.21) node {$v_2$};
\fill [color=black] (3,0) circle (1.5pt);
\draw[color=black] (3.11,-0.21) node {$v_j$};
\fill [color=black] (4,0) circle (1.5pt);
\draw[color=black] (4.12,-0.21) node {$v_{j+1}$};
\fill [color=black] (5,0) circle (1.5pt);
\draw[color=black] (5.13,-0.21) node {$v_{j+2}$};
\fill [color=black] (7,0) circle (1.5pt);
\draw[color=black] (7.11,-0.21) node {$v_{i-1}$};
\fill [color=black] (8,0) circle (1.5pt);
\draw[color=black] (8.12,-0.21) node {$v_i$};
\fill [color=black] (9,0) circle (1.5pt);
\draw[color=black] (9.12,-0.21) node {$v_{i+1}$};
\fill [color=black] (11,0) circle (1.5pt);
\draw[color=black] (11.08,-0.21) node {$v_{n-1}$};
\fill [color=black] (12,0) circle (1.5pt);
\draw[color=black] (12.1,-0.21) node {$v_n$};
\fill [color=black] (8.4,2) circle (.5pt);
\fill [color=black] (8.3,2) circle (.5pt);
\fill [color=black] (8.5,2) circle (.5pt);
\draw[color=black] (8.12,2.21) node {$I_{v_i}$};
\draw[color=black] (6.12,1.21) node {$I_{v_j}$};

\end{scriptsize}
\end{tikzpicture}
    \caption{$I_{v_i}\cap I_{v_j}\neq\emptyset$, where $i>j$.}
    \label{fig:2}
\end{figure}

\paragraph{Case 2:} When \( i < j \).
Similarly, any vertex appearing between \( v_i \) and \( v_j \) that does not belong to the same partite set \( X_\alpha \) will be adjacent to \( v_i \) (see Figure~3).
Hence, \( R_i \) will contain the 1 corresponding to the entry \( (i, j) \) in the adjacency matrix.
\begin{figure}[H]
    \centering
    \begin{tikzpicture}[line cap=round,line join=round,x=1.0cm,y=1.0cm]
\clip(-1,-0.5) rectangle (13,2.5);
\draw (3,1)-- (8,1);
\fill [color=black] (8.2,1) circle (.5pt);
\fill [color=black] (8.1,1) circle (.5pt);
\fill [color=black] (8.3,1) circle (.5pt);

\draw (8,2)-- (8.3,2);
\begin{scriptsize}
\fill [color=black] (0,0) circle (1.5pt);
\draw[color=black] (0.11,-0.21) node {$v_1$};
\fill [color=black] (1,0) circle (1.5pt);
\draw[color=black] (1.11,-0.21) node {$v_2$};
\fill [color=black] (3,0) circle (1.5pt);
\draw[color=black] (3.11,-0.21) node {$v_i$};
\fill [color=black] (4,0) circle (1.5pt);
\draw[color=black] (4.12,-0.21) node {$v_{i+1}$};
\fill [color=black] (5,0) circle (1.5pt);
\draw[color=black] (5.13,-0.21) node {$v_{i+2}$};
\fill [color=black] (7,0) circle (1.5pt);
\draw[color=black] (7.11,-0.21) node {$v_{j-1}$};
\fill [color=black] (8,0) circle (1.5pt);
\draw[color=black] (8.12,-0.21) node {$v_j$};
\fill [color=black] (9,0) circle (1.5pt);
\draw[color=black] (9.12,-0.21) node {$v_{j+1}$};
\fill [color=black] (11,0) circle (1.5pt);
\draw[color=black] (11.08,-0.21) node {$v_{n-1}$};
\fill [color=black] (12,0) circle (1.5pt);
\draw[color=black] (12.1,-0.21) node {$v_n$};
\fill [color=black] (8.4,2) circle (.5pt);
\fill [color=black] (8.3,2) circle (.5pt);
\fill [color=black] (8.5,2) circle (.5pt);
\draw[color=black] (8.12,2.21) node {$I_{v_j}$};
\draw[color=black] (6.12,1.21) node {$I_{v_i}$};

\end{scriptsize}
\end{tikzpicture}
    \caption{$I_{v_i}\cap I_{v_j}\neq\emptyset$, where $i<j$.}
    \label{fig:3}
\end{figure}
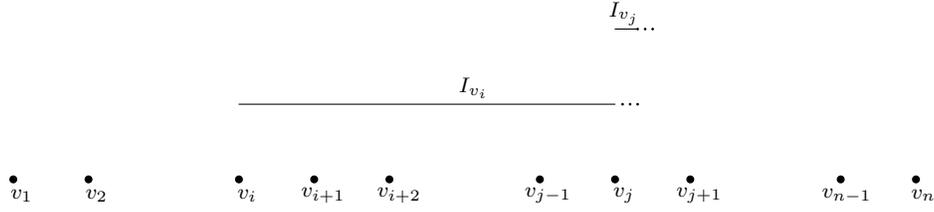
\medskip
In either case, the 1 at the \( (i, j) \)-th entry is contained in one of \( R_i \) or \( C_j \).
Therefore, the sets \( R_i \) and \( C_j \) together include all the 1’s of the adjacency matrix \( A \).\vspace{.3cm}\\
Sufficiency:
Consider an \( r \)-partite graph \( G = (X_1, X_2, \ldots, X_r, E) \), where the vertex set
$X = \bigcup_{i=1}^{r} X_i$
is ordered as \( v_1, v_2, \ldots, v_n \).
Assume that this ordering is an \textit{\( r \)-interval ordering}.

Assume if the rows and columns of the adjacency matrix of \( G \) are arranged according to this ordering,
then the sets \( R_i \) and \( C_j \) together contain all the 1’s in the adjacency matrix.\\
We now construct an interval model for \( G \).
If \( R_i = \emptyset \), define \( I_{v_i} = [i, i] \).
Otherwise, let \( r_i \) be the greatest column index such that the entry at the \( (i, r_i) \)-th position of \( A \) is 1.
Then define \( I_{v_i} = [i, r_i] \).

It remains to show that the family of intervals
$\mathcal{I} = \{ I_{v_i} : 1 \leq i \leq n \}$
serves as an interval model for the \( r \)-partite graph \( G \).\\
Suppose $I_{v_i}\cap I_{v_j}\neq\emptyset$, where ($i<j$) and $v_i$, $v_j$ belong to different partite sets. Then by the construction of the intervals we can say that $j\in I_{v_i}$ $\implies \mathcal{R}_i$ contains the $1$ at position $(i,j)$ of the adjacency matrix of $G$. Therefore $v_i$ is adjacent to $v_j$.\\ 
Conversely, let $v_i$ is adjacent to $v_j$ in $G$, where $(i>j)$ , say. Therefore $\mathcal{R}_j$ will contain the 1 at the $(j,i)$-th entry. Implies $I_{v_j}$ contains $i$, hence by the definition of $I_{v_i}$ we have $I_{v_i}\cap I_{v_j}\neq\emptyset$. Therefore the class of intervals as constructed above is an interval model for the $r$-partite graph and so $G=(X_1,X_2,...,X_r,E)$ is an interval $r$-graph.

\end{proof}
\noindent\textbf{Example 1.} Consider the following $3$-partite graph in Figure~4. Using the above theorem we show that it is an interval $3$-graph. 
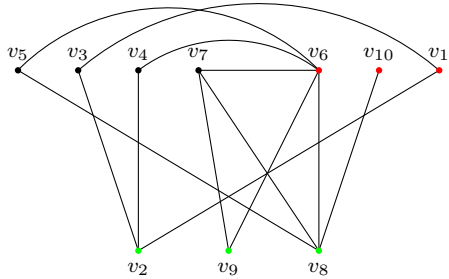
\begin{figure}[H]
    \centering
    \begin{tikzpicture}[line cap=round,line join=round,x=1.0cm,y=1.0cm,scale=.8]
\clip(2,2) rectangle (11,8);
\draw [shift={(5.5,3.5)}] plot[domain=0.79:2.36,variable=\t]({1*3.54*cos(\t r)+0*3.54*sin(\t r)},{0*3.54*cos(\t r)+1*3.54*sin(\t r)});
\draw [shift={(7,2.5)}] plot[domain=0.86:2.28,variable=\t]({1*4.61*cos(\t r)+0*4.61*sin(\t r)},{0*4.61*cos(\t r)+1*4.61*sin(\t r)});

\draw [shift={(6.5,4)}] plot[domain=0.93:2.21,variable=\t]({1*2.5*cos(\t r)+0*2.5*sin(\t r)},{0*2.5*cos(\t r)+1*2.5*sin(\t r)});
\draw (6,6)-- (8,6);
\draw (3,6)-- (8,3);
\draw (4,6)-- (5,3);
\draw (5,6)-- (5,3);
\draw (5,3)-- (10,6);
\draw (6,6)-- (6.5,3);
\draw (6,6)-- (8,3);
\draw (8,6)-- (6.5,3);
\draw (8,6)-- (8,3);
\draw (9,6)-- (8,3);
\begin{scriptsize}
\fill [color=black] (6,6) circle (1.5pt);
\draw[color=black] (6.,6.27) node {$v_7$};

\fill [color=black] (3,6) circle (1.5pt);
\draw[color=black] (3.,6.27) node {$v_5$};
\fill [color=black] (4,6) circle (1.5pt);
\draw[color=black] (4.,6.27) node {$v_3$};
\fill [color=black] (5,6) circle (1.5pt);
\draw[color=black] (5.,6.27) node {$v_4$};
\fill [color=red] (8,6) circle (1.5pt);
\draw[color=black] (8.,6.27) node {$v_6$};
\fill [color=red] (9,6) circle (1.5pt);
\draw[color=black] (9.,6.27) node {$v_{10}$};
\fill [color=red] (10,6) circle (1.5pt);
\draw[color=black] (10.,6.27) node {$v_1$};
\fill [color=green] (5,3) circle (1.5pt);
\draw[color=black] (5.,2.7) node {$v_2$};
\fill [color=green] (6.5,3) circle (1.5pt);
\draw[color=black] (6.5,2.7) node {$v_9$};
\fill [color=green] (8,3) circle (1.5pt);
\draw[color=black] (8.,2.7) node {$v_8$};
\end{scriptsize}
\end{tikzpicture}
    \caption{A $3$-partite graph.}
    \label{fig:4}
\end{figure}

\begin{figure}[H]
    \centering

    \begin{tikzpicture}[line cap=round,line join=round,x=1cm,y=1cm,scale=1.5]
\clip(.5,.9) rectangle (11.2,11.5);
\draw [line width=1.pt] (1,1)-- (11.,1);
\draw [line width=1.pt] (1,1)-- (1.,11);
\draw [line width=1.pt] (1,11)-- (11.,11);
\draw [line width=1.pt] (11,1)-- (11.,11);
\draw [line width=1.pt] (2,1)-- (2.,11);
\draw [line width=1.pt] (3,1)-- (3.,11);
\draw [line width=1.pt] (4,1)-- (4.,11);
\draw [line width=1.pt] (5,1)-- (5.,11);
\draw [line width=1.pt] (6,1)-- (6.,11);
\draw [line width=1.pt] (7,1)-- (7.,11);
\draw [line width=1.pt] (8,1)-- (8.,11);
\draw [line width=1.pt] (9,1)-- (9.,11);
\draw [line width=1.pt] (10,1)-- (10.,11);
\draw [line width=1.pt] (1,2)-- (11.,2);
\draw [line width=1.pt] (1,3)-- (11.,3);
\draw [line width=1.pt] (1,4)-- (11.,4);
\draw [line width=1.pt] (1,5)-- (11.,5);
\draw [line width=1.pt] (1,6)-- (11.,6);
\draw [line width=1.pt] (1,7)-- (11.,7);
\draw [line width=1.pt] (1,8)-- (11.,8);
\draw [line width=1.pt] (1,9)-- (11.,9);
\draw [line width=1.pt] (1,10)-- (11.,10);

\draw (1.3,1.7) node[anchor=north west] {$0$};
\draw (2.3,1.7) node[anchor=north west] {$0$};
\draw (3.3,1.7) node[anchor=north west] {$0$};
\draw (4.3,1.7) node[anchor=north west] {$0$};
\draw (5.3,1.7) node[anchor=north west] {$0$};
\draw (6.3,1.7) node[anchor=north west] {$0$};
\draw (7.3,1.7) node[anchor=north west] {$0$};
\draw (8.3,1.7) node[anchor=north west] {$1$};
\draw (9.3,1.7) node[anchor=north west] {$0$};
\draw (10.3,1.7) node[anchor=north west] {$0$};

\draw (1.3,2.7) node[anchor=north west] {$0$};
\draw (2.3,2.7) node[anchor=north west] {$0$};
\draw (3.3,2.7) node[anchor=north west] {$0$};
\draw (4.3,2.7) node[anchor=north west] {$0$};
\draw (5.3,2.7) node[anchor=north west] {$0$};
\draw (6.3,2.7) node[anchor=north west] {$1$};
\draw (7.3,2.7) node[anchor=north west] {$1$};
\draw (8.3,2.7) node[anchor=north west] {$0$};
\draw (9.3,2.7) node[anchor=north west] {$0$};
\draw (10.3,2.7) node[anchor=north west] {$0$};

\draw (1.3,3.7) node[anchor=north west] {$0$};
\draw (2.3,3.7) node[anchor=north west] {$0$};
\draw (3.3,3.7) node[anchor=north west] {$0$};
\draw (4.3,3.7) node[anchor=north west] {$0$};
\draw (5.3,3.7) node[anchor=north west] {$1$};
\draw (6.3,3.7) node[anchor=north west] {$1$};
\draw (7.3,3.7) node[anchor=north west] {$1$};
\draw (8.3,3.7) node[anchor=north west] {$0$};
\draw (9.3,3.7) node[anchor=north west] {$0$};
\draw (10.3,3.7) node[anchor=north west] {$1$};

\draw (1.3,4.7) node[anchor=north west] {$0$};
\draw (2.3,4.7) node[anchor=north west] {$0$};
\draw (3.3,4.7) node[anchor=north west] {$0$};
\draw (4.3,4.7) node[anchor=north west] {$0$};
\draw (5.3,4.7) node[anchor=north west] {$0$};
\draw (6.3,4.7) node[anchor=north west] {$1$};
\draw (7.3,4.7) node[anchor=north west] {$0$};
\draw (8.3,4.7) node[anchor=north west] {$1$};
\draw (9.3,4.7) node[anchor=north west] {$1$};
\draw (10.3,4.7) node[anchor=north west] {$0$};

\draw (1.3,5.7) node[anchor=north west] {$0$};
\draw (2.3,5.7) node[anchor=north west] {$0$};
\draw (3.3,5.7) node[anchor=north west] {$0$};
\draw (4.3,5.7) node[anchor=north west] {$1$};
\draw (5.3,5.7) node[anchor=north west] {$1$};
\draw (6.3,5.7) node[anchor=north west] {$0$};
\draw (7.3,5.7) node[anchor=north west] {$1$};
\draw (8.3,5.7) node[anchor=north west] {$1$};
\draw (9.3,5.7) node[anchor=north west] {$1$};
\draw (10.3,5.7) node[anchor=north west] {$0$};

\draw (1.3,6.7) node[anchor=north west] {$0$};
\draw (2.3,6.7) node[anchor=north west] {$0$};
\draw (3.3,6.7) node[anchor=north west] {$0$};
\draw (4.3,6.7) node[anchor=north west] {$0$};
\draw (5.3,6.7) node[anchor=north west] {$0$};
\draw (6.3,6.7) node[anchor=north west] {$1$};
\draw (7.3,6.7) node[anchor=north west] {$0$};
\draw (8.3,6.7) node[anchor=north west] {$1$};
\draw (9.3,6.7) node[anchor=north west] {$0$};
\draw (10.3,6.7) node[anchor=north west] {$0$};

\draw (1.3,7.7) node[anchor=north west] {$0$};
\draw (2.3,7.7) node[anchor=north west] {$1$};
\draw (3.3,7.7) node[anchor=north west] {$0$};
\draw (4.3,7.7) node[anchor=north west] {$0$};
\draw (5.3,7.7) node[anchor=north west] {$0$};
\draw (6.3,7.7) node[anchor=north west] {$1$};
\draw (7.3,7.7) node[anchor=north west] {$0$};
\draw (8.3,7.7) node[anchor=north west] {$0$};
\draw (9.3,7.7) node[anchor=north west] {$0$};
\draw (10.3,7.7) node[anchor=north west] {$0$};

\draw (1.3,8.7) node[anchor=north west] {$1$};
\draw (2.3,8.7) node[anchor=north west] {$1$};
\draw (3.3,8.7) node[anchor=north west] {$0$};
\draw (4.3,8.7) node[anchor=north west] {$0$};
\draw (5.3,8.7) node[anchor=north west] {$0$};
\draw (6.3,8.7) node[anchor=north west] {$0$};
\draw (7.3,8.7) node[anchor=north west] {$0$};
\draw (8.3,8.7) node[anchor=north west] {$0$};
\draw (9.3,8.7) node[anchor=north west] {$0$};
\draw (10.3,8.7) node[anchor=north west] {$0$};

\draw (1.3,9.7) node[anchor=north west] {$1$};
\draw (2.3,9.7) node[anchor=north west] {$0$};
\draw (3.3,9.7) node[anchor=north west] {$1$};
\draw (4.3,9.7) node[anchor=north west] {$1$};
\draw (5.3,9.7) node[anchor=north west] {$0$};
\draw (6.3,9.7) node[anchor=north west] {$0$};
\draw (7.3,9.7) node[anchor=north west] {$0$};
\draw (8.3,9.7) node[anchor=north west] {$0$};
\draw (9.3,9.7) node[anchor=north west] {$0$};
\draw (10.3,9.7) node[anchor=north west] {$0$};

\draw (1.3,10.7) node[anchor=north west] {$0$};
\draw (2.3,10.7) node[anchor=north west] {$1$};
\draw (3.3,10.7) node[anchor=north west] {$1$};
\draw (4.3,10.7) node[anchor=north west] {$0$};
\draw (5.3,10.7) node[anchor=north west] {$0$};
\draw (6.3,10.7) node[anchor=north west] {$0$};
\draw (7.3,10.7) node[anchor=north west] {$0$};
\draw (8.3,10.7) node[anchor=north west] {$0$};
\draw (9.3,10.7) node[anchor=north west] {$0$};
\draw (10.3,10.7) node[anchor=north west] {$0$};

\draw (1.3,11.5) node[anchor=north west] {$v_1$};
\draw (2.3,11.5) node[anchor=north west] {$v_2$};
\draw (3.3,11.5) node[anchor=north west] {$v_3$};
\draw (4.3,11.5) node[anchor=north west] {$v_4$};
\draw (5.3,11.5) node[anchor=north west] {$v_5$};
\draw (6.3,11.5) node[anchor=north west] {$v_6$};
\draw (7.3,11.5) node[anchor=north west] {$v_7$};
\draw (8.3,11.5) node[anchor=north west] {$v_8$};
\draw (9.3,11.5) node[anchor=north west] {$v_9$};
\draw (10.3,11.5) node[anchor=north west] {$v_{10}$};

\draw (0.5,10.7) node[anchor=north west] {$v_1$};
\draw (0.5,9.7) node[anchor=north west] {$v_2$};
\draw (0.5,8.7) node[anchor=north west] {$v_3$};
\draw (0.5,7.7) node[anchor=north west] {$v_4$};
\draw (0.5,6.7) node[anchor=north west] {$v_5$};
\draw (0.5,5.7) node[anchor=north west] {$v_6$};
\draw (0.5,4.7) node[anchor=north west] {$v_7$};
\draw (0.5,3.7) node[anchor=north west] {$v_8$};
\draw (0.5,2.7) node[anchor=north west] {$v_9$};
\draw (0.5,1.7) node[anchor=north west] {$v_{10}$};

\draw [rotate around={0.:(3,10.5)},line width=1.pt] (3,10.5) ellipse (1cm and 0.2cm);
\draw [rotate around={-90.:(1.5,9.)},line width=1.pt] (1.5,9.) ellipse (1cm and 0.2cm);
\draw [rotate around={0.:(4.,9.5)},line width=1.pt] (4.,9.5) ellipse (1cm and 0.2cm);
\draw [rotate around={-90.:(2.5,8)},line width=1.pt] (2.5,8) ellipse (1.cm and 0.2cm);
\draw [rotate around={0.:(6.5,7.5)},line width=1.pt] (6.5,7.5) ellipse (.5cm and 0.2cm);
\draw [rotate around={0.:(7.5,6.5)},line width=1.pt] (7.5,6.5) ellipse (1.5cm and 0.15cm);
\draw [rotate around={-90.:(4.5,5.5)},line width=1.pt] (4.5,5.5) ellipse (.5cm and 0.2cm);
\draw [rotate around={90.:(5.5,4.5)},line width=1.pt] (5.5,4.5) ellipse (1.5cm and 0.15cm);
\draw [rotate around={0.:(8.5,5.5)},line width=1.pt] (8.5,5.5) ellipse (1.5cm and 0.15cm);
\draw [rotate around={0.:(9,4.5)},line width=1.pt] (9,4.5) ellipse (1cm and 0.2cm);
\draw [rotate around={90.:(7.5,3)},line width=1.pt] (7.5,3) ellipse (1cm and 0.2cm);
\draw [rotate around={0.:(10.5,3.5)},line width=1.pt] (10.5,3.5) ellipse (.5cm and 0.15cm);
\draw [rotate around={90.:(6.5,3.5)},line width=1.pt] (6.5,3.5) ellipse (1.5cm and 0.2cm);
\draw [rotate around={90.:(8.5,1.5)},line width=1.pt] (8.5,1.5) ellipse (.5cm and 0.15cm);

\draw (2.2,11) node[anchor=north west,scale=.8] {$\mathcal{R}_1$};
\draw (2.2,10.4) node[anchor=north west,scale=1.5] {$\rightarrow$};
\draw (3.2,10.) node[anchor=north west,scale=.8] {$\mathcal{R}_2$};
\draw (3.2,9.4) node[anchor=north west,scale=1.5] {$\rightarrow$};
\draw (1.,9.7) node[anchor=north west,scale=.8] {$\mathcal{C}_1$};
\draw (1.5,9.7) node[anchor=north west,scale=1.5] {$\downarrow$};
\draw (2.,8.8) node[anchor=north west,scale=.8] {$\mathcal{C}_2$};
\draw (2.6,8.8) node[anchor=north west,scale=1.5] {$\downarrow$};
\draw (6,8) node[anchor=north west,scale=.8] {$\mathcal{R}_4$};
\draw (6,7.4) node[anchor=north west,scale=1.5] {$\rightarrow$};
\draw (4.,6) node[anchor=north west,scale=.8] {$\mathcal{C}_4$};
\draw (4.55,6) node[anchor=north west,scale=1.5] {$\downarrow$};
\draw (5.,6) node[anchor=north west,scale=.8] {$\mathcal{C}_5$};
\draw (5.5,6) node[anchor=north west,scale=1.5] {$\downarrow$};
\draw (6.2,7) node[anchor=north west,scale=.8] {$\mathcal{R}_5$};
\draw (6.2,6.4) node[anchor=north west,scale=1.5] {$\rightarrow$};
\draw (7.2,6) node[anchor=north west,scale=.8] {$\mathcal{R}_6$};
\draw (7.2,5.4) node[anchor=north west,scale=1.5] {$\rightarrow$};
\draw (8.2,5) node[anchor=north west,scale=.8] {$\mathcal{R}_7$};
\draw (8.2,4.4) node[anchor=north west,scale=1.5] {$\rightarrow$};
\draw (10.2,4) node[anchor=north west,scale=.8] {$\mathcal{R}_8$};
\draw (10.2,3.4) node[anchor=north west,scale=1.5] {$\rightarrow$};
\draw (6.,4.9) node[anchor=north west,scale=.8] {$\mathcal{C}_6$};
\draw (6.6,4.9) node[anchor=north west,scale=1.5] {$\downarrow$};
\draw (7,3.9) node[anchor=north west,scale=.8] {$\mathcal{C}_7$};
\draw (7.6,3.9) node[anchor=north west,scale=1.5] {$\downarrow$};
\draw (8,1.9) node[anchor=north west,scale=.8] {$\mathcal{C}_8$};
\draw (8.6,1.9) node[anchor=north west,scale=1.5] {$\downarrow$};
\end{tikzpicture}
    \caption{The adjacency matrix of the 3-partite graph in
Figure~4, where the rows and columns are arranged according
to the increasing order of their indices, and corresponding
$\mathcal{R}_i$'s and $\mathcal{C}_j$'s are shown in the
elliptic regions. Its interval representation is as follows:
Red: $I_{v_1}=[1,3]$, $I_{v_6}=[6,9]$, $I_{v_{10}}=[10,10]$;
Green: $I_{v_2}=[3,4]$, $I_{v_8}=[8,10]$, $I_{v_9}=[9,9]$;
Black: $I_{v_3}=[3,3]$, $I_{v_4}=[4,6]$, $I_{v_5}=[5,8]$,
$I_{v_7}=[7,9]$.
}
    \label{}
\end{figure}
\noindent{}
Note that the zeros inside the elliptic regions of $\mathcal{R}_{5}$ and $\mathcal{C}_{5}$, as shown in 
Figure~5, do not belong to the sets $\mathcal{R}_{5}$ and $\mathcal{C}_{5}$. They simply indicate that 
the 1’s in $\mathcal{R}_{5}$ and $\mathcal{C}_{5}$ are not strictly consecutive, but rather almost consecutive.
\par Pavol Hell and Jing Huang \cite{hell-huang} characterized interval bigraphs using forbidden patterns with respect to a specific vertex ordering in the following theorem.
\begin{theo}[\cite{hell-huang}]
   \textit{ Let $H$ be a bipartite graph with bipartition $(X,Y)$. Then the following statements are equivalent}:
    \begin{itemize}
        \item\textit{ $H$ is an interval bigraph};
        \item\textit{ the vertices of $H$ can be ordered $v_1$, $v_2$, ..., $v_n$, so that there do not exist $a<b<c$ in the configurations in Figure 6. (Black vertices are in $X$, red vertices in $Y$, or conversely, and all edges not shown are absent.)}
        \item \textit{the vertices of $H$ can be ordered $v_1$, $v_2$, ..., $v_n$, so that there do not exist $a<b<c<d$ in the configurations in Figure 7}.
    \end{itemize}
    
\end{theo}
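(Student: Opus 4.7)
The plan is to establish the equivalence of the three conditions through the chain $(1)\Leftrightarrow(2)$ and $(2)\Leftrightarrow(3)$, leaning on Theorem~\ref{t1} specialized to $r=2$.

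For $(1)\Leftrightarrow(2)$, I would apply Theorem~\ref{t1} with $r=2$, which states that a bipartite graph is an interval bigraph iff its vertex set admits a generalized interval ordering. It then remains to show that, in the bipartite setting, the generalized interval ordering condition coincides exactly with the absence of the three-vertex patterns of Figure~6. Let $X,Y$ be the partite sets. The generalized interval ordering demands that whenever $v_iv_j\in E$ with $i>j$ and, say, $v_j\in X$ and $v_i\in Y$, every $v_\ell$ with $j<\ell<i$ and $v_\ell\notin X$ is adjacent to $v_j$. Its negation is the existence of a triple $a<b<c$ with $v_a\in X$, $v_c\in Y$, $v_b\in Y$, $v_av_c\in E$ and $v_av_b\notin E$ (together with the symmetric versions obtained by swapping $X$ and $Y$, or by switching which of $a,c$ plays the role of $v_j$). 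These should be precisely the configurations of Figure~6, so the two conditions are identical.

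For $(2)\Leftrightarrow(3)$, I would first note the easy direction $(2)\Rightarrow(3)$: each four-vertex configuration of Figure~7 contains a triple of vertices realizing a three-vertex configuration of Figure~6, so avoidance of the three-vertex patterns forces avoidance of the four-vertex patterns. For the harder direction $(3)\Rightarrow(2)$, I would argue by contradiction: suppose the given ordering avoids the Figure~7 patterns but admits a triple $a<b<c$ realizing a Figure~6 pattern. I would then seek a fourth vertex $d$ (either to the right of $c$, between existing positions, or to the left of $a$) which, together with the triple, realizes one of the patterns of Figure~7, thereby forcing a contradiction. Alternatively, one may bypass $(3)\Rightarrow(2)$ and instead prove $(3)\Rightarrow(1)$ directly by constructing the interval model $I_{v_i}=[\,i,r_i\,]$ with $r_i=\max\bigl(\{k:v_iv_k\in E\}\cup\{i\}\bigr)$, exactly as in the sufficiency proof of Theorem~\ref{t1}, and verifying that an intersection $I_{v_i}\cap I_{v_j}\neq\emptyset$ between vertices in opposite parts (with $i<j$) forces $v_iv_j\in E$; the failure of this implication would exhibit four vertices $v_i$, some intermediate $v_b$, $v_j$, and $v_{r_i}$ in a Figure~7 configuration.

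The main obstacle, as is typical for this style of result, will be $(3)\Rightarrow(1)$ (or equivalently $(3)\Rightarrow(2)$). The four-vertex list of Figure~7 is nominally weaker than the three-vertex list, and showing that it nevertheless suffices requires a careful case analysis over the partite-set memberships of the four vertices and over the status of the \emph{hidden} (absent) edges. The substantive content of the theorem lies precisely in the claim that the forbidden $4$-patterns of Figure~7 are rich enough to rule out every obstruction to an interval representation, and I expect the bulk of the work to be this enumeration rather than any single conceptual leap.
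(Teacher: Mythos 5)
First, note that the paper does not actually prove this statement: it is quoted from Hell and Huang \cite{hell-huang} as background for Theorem~\ref{t4}, so there is no in-paper argument to compare yours against and your proposal must stand on its own. Your reduction of $(1)\Leftrightarrow(2)$ to Theorem~\ref{t1} with $r=2$ is essentially sound, but be careful: the pattern forbidden by a generalized interval ordering ($v_a$ alone in one part, $v_b,v_c$ together in the other, $v_av_c\in E$, $v_av_b\notin E$ --- the edge is anchored at its \emph{left} end) is the mirror image of the Figure~6 pattern ($v_a,v_b$ together, $v_c$ alone, $v_bv_c\notin E$ --- anchored at the \emph{right} end), so the two conditions coincide only after reversing the ordering; ``switching which of $a,c$ plays the role of $v_j$'' is not a symmetry of a fixed ordering. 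The direction $(2)\Rightarrow(3)$ is fine, since each configuration of Figure~7 does contain a Figure~6 triple.

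The genuine gap is the hard direction, and both of your proposed routes fail because they try to work within the \emph{same} ordering. For $(3)\Rightarrow(2)$: take $X=\{x_1,x_2\}$, $Y=\{y\}$, $E=\{x_1y\}$ with the ordering $x_1,x_2,y$. This ordering contains the Figure~6 pattern on $(x_1,x_2,y)$ but, having only three vertices, vacuously avoids every Figure~7 pattern, so no fourth vertex $d$ can be found to complete a forbidden $4$-configuration; the theorem only asserts that some \emph{other} ordering avoids the $3$-patterns. For the alternative $(3)\Rightarrow(1)$: take $X=\{x\}$, $Y=\{y_1,y_2\}$, $E=\{xy_2\}$ with the ordering $x,y_1,y_2$, which again vacuously avoids Figure~7; your construction gives $I_{x}=[1,3]$ and $I_{y_1}=[2,2]$, which intersect although $xy_1\notin E$, so the resulting interval model is wrong. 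The substance of the Hell--Huang theorem is precisely that from an ordering avoiding only the $4$-patterns one must \emph{construct a new ordering} (in \cite{hell-huang} this is done via the connection with circular arc graphs and two-directional ray representations), and your proposal contains no mechanism for doing so.
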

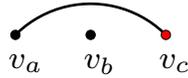
\begin{figure}[H]
    \centering
    \begin{tikzpicture}[line cap=round,line join=round,x=1.0cm,y=1.0cm,scale=.5]
\clip(1.,-1) rectangle (7.,3.);
\draw [shift={(4.,-1.)},line width=1.pt]  plot[domain=0.7853981633974483:2.356194490192345,variable=\t]({1.*2.8284271247461903*cos(\t r)+0.*2.8284271247461903*sin(\t r)},{0.*2.8284271247461903*cos(\t r)+1.*2.8284271247461903*sin(\t r)});
\begin{scriptsize}
\draw [fill=black] (2.,1.) circle (3.5pt);
\draw [fill=black] (4.,1.) circle (3.5pt);
\draw [fill=red] (6.,1.) circle (3.5pt);
\draw (1.5,.8) node[anchor=north west,scale=1.5] {$v_a$};
\draw (3.5,.8) node[anchor=north west,scale=1.5] {$v_b$};
\draw (5.5,.8) node[anchor=north west,scale=1.5] {$v_c$};
\end{scriptsize}
\end{tikzpicture}
    \caption{Forbidden pattern.}
    \label{fig:enter-label-3}
\end{figure}
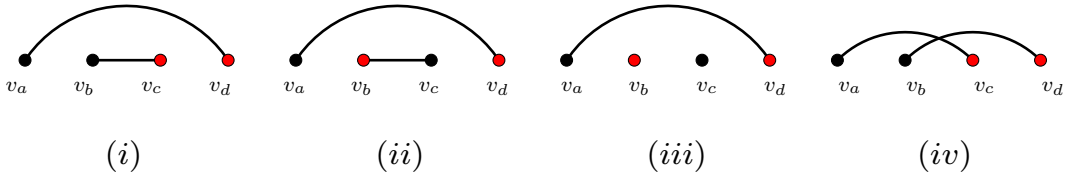
\begin{figure}[H]
    \centering
    \begin{tikzpicture}[line cap=round,line join=round,x=1.0cm,y=1.0cm, scale=.9]
\clip(1.,0.) rectangle (19.,3.);
\draw [line width=1.pt] (3.,2.)-- (4.,2.);
\draw [shift={(3.5,1.)},line width=1.pt]  plot[domain=0.5880026035475675:2.5535900500422257,variable=\t]({1.*1.8027756377319948*cos(\t r)+0.*1.8027756377319948*sin(\t r)},{0.*1.8027756377319948*cos(\t r)+1.*1.8027756377319948*sin(\t r)});
\draw [line width=1.pt] (7.,2.)-- (8.,2.);
\draw [shift={(7.5,1.)},line width=1.pt]  plot[domain=0.5880026035475675:2.5535900500422257,variable=\t]({1.*1.8027756377319948*cos(\t r)+0.*1.8027756377319948*sin(\t r)},{0.*1.8027756377319948*cos(\t r)+1.*1.8027756377319948*sin(\t r)});
\draw [shift={(11.5,1.)},line width=1.pt]  plot[domain=0.5880026035475675:2.5535900500422257,variable=\t]({1.*1.8027756377319948*cos(\t r)+0.*1.8027756377319948*sin(\t r)},{0.*1.8027756377319948*cos(\t r)+1.*1.8027756377319948*sin(\t r)});
\draw [shift={(15.,1.)},line width=1.pt]  plot[domain=0.7853981633974483:2.356194490192345,variable=\t]({1.*1.4142135623730951*cos(\t r)+0.*1.4142135623730951*sin(\t r)},{0.*1.4142135623730951*cos(\t r)+1.*1.4142135623730951*sin(\t r)});
\draw [shift={(16.,1.)},line width=1.pt]  plot[domain=0.7853981633974483:2.356194490192345,variable=\t]({1.*1.4142135623730951*cos(\t r)+0.*1.4142135623730951*sin(\t r)},{0.*1.4142135623730951*cos(\t r)+1.*1.4142135623730951*sin(\t r)});
\begin{scriptsize}
\draw [fill=black] (2.,2.) circle (2.5pt);
\draw [fill=black] (3.,2.) circle (2.5pt);
\draw [fill=red] (4.,2.) circle (2.5pt);
\draw [fill=red] (5.,2.) circle (2.5pt);
\draw [fill=black] (6.,2.) circle (2.5pt);
\draw [fill=red] (7.,2.) circle (2.5pt);
\draw [fill=black] (8.,2.) circle (2.5pt);
\draw [fill=red] (9.,2.) circle (2.5pt);
\draw [fill=black] (10.,2.) circle (2.5pt);
\draw [fill=red] (11.,2.) circle (2.5pt);
\draw [fill=black] (12.,2.) circle (2.5pt);
\draw [fill=red] (13.,2.) circle (2.5pt);
\draw [fill=black] (14.,2.) circle (2.5pt);
\draw [fill=black] (15.,2.) circle (2.5pt);
\draw [fill=red] (16.,2.) circle (2.5pt);
\draw [fill=red] (17.,2.) circle (2.5pt);
\draw (1.6,1.8) node[anchor=north west,scale=1.] {$v_a$};
\draw (2.6,1.8) node[anchor=north west,scale=1.] {$v_b$};
\draw (3.6,1.8) node[anchor=north west,scale=1.] {$v_c$};
\draw (4.6,1.8) node[anchor=north west,scale=1.] {$v_d$};
\draw (3,1) node[anchor=north west,scale=1.5] {$(i)$};

\draw (5.7,1.8) node[anchor=north west,scale=1.] {$v_a$};
\draw (6.7,1.8) node[anchor=north west,scale=1.] {$v_b$};
\draw (7.7,1.8) node[anchor=north west,scale=1.] {$v_c$};
\draw (8.7,1.8) node[anchor=north west,scale=1.] {$v_d$};
\draw (7,1) node[anchor=north west,scale=1.5] {$(ii)$};

\draw (9.8,1.8) node[anchor=north west,scale=1.] {$v_a$};
\draw (10.8,1.8) node[anchor=north west,scale=1.] {$v_b$};
\draw (11.8,1.8) node[anchor=north west,scale=1.] {$v_c$};
\draw (12.8,1.8) node[anchor=north west,scale=1.] {$v_d$};
\draw (11,1) node[anchor=north west,scale=1.5] {$(iii)$};

\draw (13.9,1.8) node[anchor=north west,scale=1.] {$v_a$};
\draw (14.9,1.8) node[anchor=north west,scale=1.] {$v_b$};
\draw (15.9,1.8) node[anchor=north west,scale=1.] {$v_c$};
\draw (16.9,1.8) node[anchor=north west,scale=1.] {$v_d$};
\draw (15,1) node[anchor=north west,scale=1.5] {$(iv)$};

\end{scriptsize}
\end{tikzpicture}
    \caption{Forbidden patterns.}
    \label{fig:enter-label-4}
\end{figure}
It is a natural question whether the classes of interval $r$-graphs ($r \geq 3$) can be characterized by a finite collection of forbidden patterns with respect to some specific ordering of their vertices. The following theorem provides an answer to this question.
\begin{theo}\label{t4}
    Let $G = (X_1, X_2, \ldots, X_r, E)$ be an $r$-partite graph ($r \geq 3$). Then the following 
statements are equivalent:
\begin{enumerate}[i.]
    \item $G$ is an interval $r$-graph;
    \item The vertices of $G$ can be ordered $v_1, v_2, \ldots, v_n$ such that no indices 
    $i < j < k $ occur in the configurations shown in Figure~8. 
    (Here, different colors indicate that vertices belong to 
    different partite sets; moreover, all edges not explicitly drawn are absent.)
\end{enumerate}
\end{theo}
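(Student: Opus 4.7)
The plan is to use Theorem~\ref{t1} as a bridge: since an $r$-partite graph is an interval $r$-graph if and only if it admits a generalized interval ordering, it suffices to prove that an ordering $v_1,\ldots,v_n$ of $V(G)$ satisfies the condition of Definition~\ref{d1} if and only if it avoids every forbidden configuration in Figure~8. Once this equivalence is in hand, (i)$\Leftrightarrow$(ii) is immediate.

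For the direction (i)$\Rightarrow$(ii), I would start with the left-endpoint ordering produced in the necessity argument of Theorem~\ref{t1}, which is a generalized interval ordering. Assume for contradiction that some triple $i<j<k$ realizes one of the configurations in Figure~8. Each such configuration forces the edge $v_iv_k$ to be present, the edge $v_iv_j$ to be absent, and $v_j$ to lie in a partite set different from that of $v_i$. Applying Definition~\ref{d1} to the edge $v_kv_i$ (with $k>i$) and to the intermediate index $j\in\{i+1,\ldots,k-1\}$ with $v_j\notin X_\alpha$ (where $v_i\in X_\alpha$), we would obtain $v_jv_i\in E$, contradicting the prescribed non-edge.

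For the direction (ii)$\Rightarrow$(i), I would argue by contrapositive. Suppose the ordering violates Definition~\ref{d1}: there exist $b<a$ with $v_bv_a\in E$, $v_a\in X_\alpha$, $v_b\in X_\beta$, $\alpha\neq\beta$, and some $\ell\in\{b+1,\ldots,a-1\}$ with $v_\ell\notin X_\beta$ and $v_\ell v_b\notin E$. Relabelling $b\mapsto i$, $\ell\mapsto j$, $a\mapsto k$, we have $i<j<k$ with $v_iv_k\in E$, $v_iv_j\notin E$, and $v_j$ in a partite set different from $X_\beta$. I would split into two subcases according to the partite set of $v_j$: either $v_j\in X_\alpha$ (the partite set of $v_k$), which reproduces the bipartite-style pattern essentially identical to the Hell--Huang configuration, or $v_j\in X_\gamma$ for some $\gamma\notin\{\alpha,\beta\}$, which is the genuinely new configuration arising when $r\ge 3$. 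In each case the resulting triple $(v_i,v_j,v_k)$ realizes exactly one of the configurations drawn in Figure~8, contradicting the hypothesis. Hence the ordering satisfies Definition~\ref{d1}, and Theorem~\ref{t1} yields that $G$ is an interval $r$-graph.

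The main delicate point is verifying that every violation of Definition~\ref{d1} is matched by precisely one of the patterns in Figure~8 and, conversely, that every pattern in Figure~8 corresponds to a genuine violation of the generalized interval ordering. The bipartite case is already covered by the single configuration of Hell--Huang, so the essential work is to check that the additional configurations introduced for $r\ge 3$ (those in which the middle vertex $v_j$ lies in a third partite set) are exhaustive and that the ``edges not shown are absent'' convention is consistent with the logical quantifiers in Definition~\ref{d1}. Once this case analysis is carried out carefully, both implications reduce to short applications of Theorem~\ref{t1}.
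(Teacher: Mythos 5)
Your proposal is correct, but it takes a genuinely different route from the paper. The paper proves Theorem~\ref{t4} directly and self-containedly: for necessity it orders the vertices by increasing left endpoints of an interval model and argues that in each configuration the intermediate vertex $v_j$ would be forced to be adjacent to $v_i$; for sufficiency it builds a fresh interval model $I'_{v_i}=[i,\,t_i+\tfrac12]$, where $t_i$ records how far to the right of $v_i$ the consecutive adjacency (among vertices outside $v_i$'s partite set) extends, and verifies this model by a case analysis on the partite set of the offending middle vertex and on whether $v_jv_k\in E$ --- exactly the split into configurations (i), (ii), (iii) of Figure~8. You instead observe that ``no configuration of Figure~8 occurs'' is a literal restatement of Definition~\ref{d1} and then invoke Theorem~\ref{t1}. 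The translation you sketch is correct and exhaustive: a violation of Definition~\ref{d1} supplies $i<j<k$ with $v_iv_k\in E$, $v_iv_j\notin E$ and $v_j$ outside the partite set of $v_i$; if $v_j$ shares the partite set of $v_k$ this is pattern (i) (and $v_jv_k$ is then automatically a non-edge), while if $v_j$ lies in a third partite set the two possibilities for the edge $v_jv_k$ give patterns (ii) and (iii). Your approach is more economical, since it avoids re-deriving the interval construction, which the paper essentially repeats from Theorem~\ref{t1} with $m_i$ replaced by $t_i$; the price is that the result is no longer self-contained. The one detail worth writing out explicitly in your subcase $v_j\in X_\gamma$ is the further split on whether $v_jv_k\in E$, since that is precisely what distinguishes configurations (ii) and (iii).
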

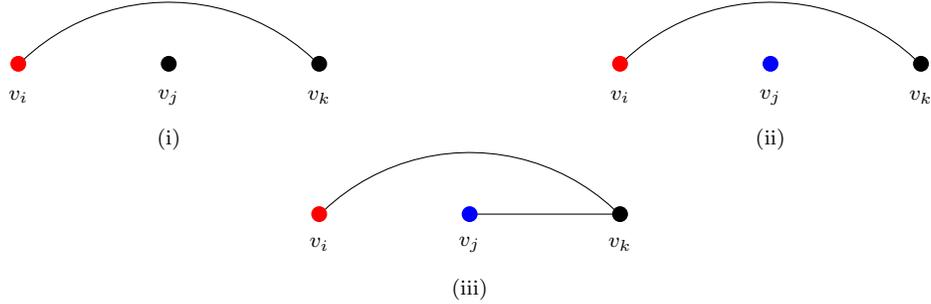
\begin{figure}[H]
    \centering
    \begin{tikzpicture}[line cap=round,line join=round,x=1.0cm,y=1.0cm,scale=2]
\clip(0,-2) rectangle (8,1);
\draw [shift={(2,-1)}] plot[domain=0.79:2.36,variable=\t]({1*1.41*cos(\t r)+0*1.41*sin(\t r)},{0*1.41*cos(\t r)+1*1.41*sin(\t r)});
\draw [shift={(6,-1)}] plot[domain=0.79:2.36,variable=\t]({1*1.41*cos(\t r)+0*1.41*sin(\t r)},{0*1.41*cos(\t r)+1*1.41*sin(\t r)});
\draw [shift={(4,-2)}] plot[domain=0.79:2.36,variable=\t]({1*1.41*cos(\t r)+0*1.41*sin(\t r)},{0*1.41*cos(\t r)+1*1.41*sin(\t r)});
\draw (4,-1)-- (5,-1);
\begin{scriptsize}
\fill [color=red] (1,0) circle (1.5pt);
\draw[color=black] (1.,-0.22) node {$v_i$};
\fill [color=black] (2,0) circle (1.5pt);
\draw[color=black] (2.,-0.22) node {$v_j$};
\fill [color=black] (3,0) circle (1.5pt);
\draw[color=black] (3.,-0.22) node {$v_k$};
\draw[color=black] (2.,-.5) node {(i)};

\fill [color=red] (5,0) circle (1.5pt);
\draw[color=black] (5.,-0.22) node {$v_i$};
\fill [color=blue] (6,0) circle (1.5pt);
\draw[color=black] (6.,-0.22) node {$v_j$};
\fill [color=black] (7,0) circle (1.5pt);
\draw[color=black] (7.,-0.22) node {$v_k$};
\draw[color=black] (6.,-.5) node {(ii)};

\fill [color=red] (3,-1) circle (1.5pt);
\draw[color=black] (3.,-1.2) node {$v_i$};
\fill [color=blue] (4,-1) circle (1.5pt);
\draw[color=black] (4.,-1.2) node {$v_j$};
\fill [color=black] (5,-1) circle (1.5pt);
\draw[color=black] (5.,-1.2) node {$v_k$};
\draw[color=black] (4.,-1.5) node {(iii)};

\end{scriptsize}
\end{tikzpicture}
    \caption{Forbidden patterns for interval $r$-graphs ($r \geq 3$) in terms of specific orderings of their vertices. }
    \label{fig:placeholder}
\end{figure}
\begin{proof}
  \noindent\textbf{Necessity.} 
Let $G = (X_1, X_2, \ldots, X_r, E)$ $(r\geq 3)$ be an \emph{interval $r$-graph}. 
Let $\mathcal{I} = \{ I_v : v \in \bigcup_{i=1}^{r} X_i \}$ denote an interval model of $G$. \\
Without loss of generality, we may assume that the interval model \( \mathcal{I} \) satisfies the following conditions:
\begin{enumerate}
    \item All intervals are closed, i.e., each interval includes its endpoints.
    \item No two intervals share the same left endpoint.
\end{enumerate}

Arrange the vertices of $G$ according to the increasing order of the left endpoints of the corresponding intervals.

\smallskip
\noindent
Consider any three vertices $v_i, v_j, v_k$ of $G$ belonging to three distinct partite sets 
$X_\alpha, X_\beta, X_\gamma$, respectively, where $i < j < k$. 
Suppose that $v_i$ and $v_k$ are adjacent. 
According to the vertex ordering, the left endpoint of the interval $I_{v_k}$ must lie within the interval $I_{v_i}$. 
Consequently, the left end points of the intervals $I_{v_{i+1}}, I_{v_{i+2}}, \ldots, I_{v_{k-1}}$ are contained in $I_{v_i}$, 
and therefore each of them intersects $I_{v_i}$.

\smallskip
\noindent
It follows that every vertex indexed between $v_i$ and $v_k$ that does not belong to $X_\alpha$ must be adjacent to $v_i$. 
In particular, $v_j \in X_\beta$ is adjacent to $v_i$. 
Hence, the configurations (ii) and (iii) shown in Figure~8, cannot occur in an interval $r$-graph.

\smallskip
\noindent
Similarly, if the vertices $v_j$ and $v_k$ belong to the same partite set, 
then the configuration (i) illustrated in Figure~8, also cannot occur in an interval $r$-graph. 
Thus, none of the three configurations shown in Figure~8 can occur in $G$.\\

\textbf{Sufficiency:}  
Suppose \( G = (X_1, X_2, \ldots, X_r, E) \) is an \( r \)-partite graph that admits an ordering \( v_1, v_2, \ldots, v_n \) such that none of the configurations illustrated in Figure~8 occur. We shall show that \( G \) is an interval \( r \)-graph.  

Place the vertices of \( G \) on the real line such that each vertex \( v_i \) is positioned at the positive integer point \( i \), for \( 1 \leq i \leq n \). We now construct an interval corresponding to each vertex \( v_i \) (\( 1 \leq i \leq n \)).  

For a vertex \( v_i \in X_\alpha \), define \( t_i \) to be the largest vertex index (in the rightward direction from \( v_i \)) corresponding to a vertex \( v_{t_i} \in X \setminus X_\alpha \) such that all vertices \( v_l \in X \setminus X_\alpha \), where \( i < l \leq t_i \), are adjacent to \( v_i \). If no such vertex index exists, then set \( t_i = i \).

Now define a family of intervals
\[
\mathcal{I}' = \{ I'_{v_i} : 1 \le i \le n \}
\]
corresponding to the vertex set of \( G \), where
\[
I'_{v_i} = [i, \, t_i + \tfrac{1}{2} ].
\]
We will show that this family of intervals is an interval representation of the graph \( G \).  

\bigskip
\noindent
\textbf{Claim 1.}
If two vertices \( v_i \in X_\alpha \) and \( v_k \in X_\gamma \) (with \( \alpha \ne \gamma \) and \( i < k \)) satisfy  
\( I'_{v_i} \cap I'_{v_k} \ne \emptyset \), then \( v_i \) and \( v_k \) are adjacent in \( G \).

\smallskip
\emph{Proof.}
From the definition of the intervals,
\[
I'_{v_i} = [i, t_i + \tfrac{1}{2}], \quad I'_{v_k} = [k, t_k + \tfrac{1}{2}].
\]
If \( I'_{v_i} \cap I'_{v_k} \ne \emptyset \), then \( k \in I'_{v_i} \), implying
\[
i \le k \le t_i + \tfrac{1}{2} \implies i \le k \le t_i.
\]
Hence, \( v_i \) is adjacent to \( v_k \). 

\bigskip
\noindent
\textbf{Claim 2.}
Conversely, suppose \( v_i \) and \( v_k \) (\( i < k \)) are adjacent in \( G \), where  
\( v_i \in X_\alpha \) and \( v_k \in X_\gamma \).  
We will show that \( I'_{v_i} \cap I'_{v_k} \ne \emptyset \).

\smallskip
\emph{Proof.}
For contradiction, assume that \( I'_{v_i} \cap I'_{v_k} = \emptyset \).  
Then, by definition of the intervals, \( t_i < k \).  
This means that \( v_i \) is not  adjacent to every vertex consecutively (from partite sets different from \( X_\alpha \)) on its right up to \( v_k \).  
Hence, there exists a vertex \( v_j \) such that \( v_j \notin X_\alpha \), \( i < j < k \), and \( v_i \) is not adjacent to \( v_j \).

We consider the following cases:

\begin{itemize}
  \item \textbf{Case 1.} \( v_j \) and \( v_k \) belong to the same partite set $X_\gamma$.  
  Then configuration (i) (see Figure~8) occurs --- a contradiction.

  \item \textbf{Case 2.} \( v_i, v_j, v_k \) belong to three distinct partite sets $X_\alpha, X_\beta, X_\gamma$ respectively.
  \begin{itemize}
    \item \emph{Subcase 1:} If \( v_j \) is not adjacent to \( v_k \), then configuration (ii) (see Figure~8) occurs --- a contradiction.
    \item \emph{Subcase 2:} If \( v_j \) is adjacent to \( v_k \), then configuration (iii) (see Figure~8) occurs --- a contradiction.
  \end{itemize}
\end{itemize}

Thus, in every possible case, we reach a contradiction.  
Therefore, if \( v_i \) and \( v_k \) are adjacent, then \( I'_{v_i} \cap I'_{v_k} \ne \emptyset \).

\bigskip
Hence, the constructed family of intervals \( \mathcal{I}' \) serves as an interval model of the \( r \)-partite graph \( G \). Therefore, \( G \) is an interval \( r \)-graph.

\end{proof}

\section{Conclusion}
In this paper, we discussed various types of vertex orderings for the most generalized version of interval bigraphs, namely interval $r$-graphs. We also identified the corresponding forbidden patterns for this class with respect to certain specific orderings. For future research, one could aim to extend these concepts to the class of circular-arc graphs, with the goal of generalizing the notion of circular-arc bigraphs as well. It would also be interesting to investigate whether a finite set of forbidden patterns exists for these generalized classes.

\noindent{}\textbf{Acknowledgement} \\
The first author sincerely acknowledges the Council of Scientific and Industrial Research (CSIR), India [File No. 09/0028(11986)/2021-EMR-I], for providing financial support through a fellowship to pursue the Ph.D. degree.

\end{document}